\documentclass[letterpaper,USenglish,cleveref,numberwithinsect,thm-restate]{no-lipics-v2022}
\usepackage[utf8]{inputenc}
\usepackage{color}
\usepackage{csquotes}
\usepackage{xspace}
\usepackage{amsmath}
\usepackage{amssymb}
\usepackage{mathtools}
\usepackage{bbm}
\usepackage{bm}
\usepackage[linesnumbered, noend]{algorithm2e}
\usepackage[draft]{fixme}
\usepackage{tikz}
\usetikzlibrary{arrows,arrows.meta,decorations.pathreplacing,decorations.pathmorphing,shapes,calc,patterns,shapes,matrix,math,quotes,positioning}
\usepackage{upgreek}

\nolinenumbers

\newcommand{\Oh}{\ensuremath{\mathcal{O}}\xspace}

\newcommand{\set}[1]{\{#1\}}
\newcommand{\setof}[2]{\set{#1\colon\,#2}}

\newcommand{\mycomment}[1]{}

\newcommand{\Min}{\textsc{Min}\xspace}
\newcommand{\Max}{\textsc{Max}\xspace}
\newcommand{\bp}{\operatorname{bp}}
\newcommand{\val}{\operatorname{val}}

\title{Bottleneck Paths Reduce to Deterministic Graphical Games and a Counterexample to a Claimed Linear-Time Algorithm}
\titlerunning{Bottleneck Paths Reduce to Deterministic Graphical Games}

\author{Egor Gorbachev}{ETH Zürich and Max Planck Institute for Informatics}{peltorator@pm.me}{https://orcid.org/0009-0005-5977-7986}{}

\authorrunning{E. Gorbachev}

\Copyright{Egor Gorbachev}

\acknowledgements{The author thanks László Kozma and Uri Zwick for helpful discussions.}

\keywords{directed bottleneck path, deterministic graphical games}

\begin{document}

\maketitle

\begin{abstract}
Chechik, Kaplan, Thorup, Zamir, and Zwick (STACS 2016) claimed a simple deterministic linear-time comparison-based algorithm for solving deterministic two-player, turn-based, zero-sum terminal-payoff games, also known as deterministic graphical games (DGGs).
We give a counterexample to their algorithm.

We also give a deterministic linear-time reduction from the directed $s$--$t$ bottleneck path (BP) problem to the DGG problem.
Consequently, a linear-time comparison-based algorithm for computing the value of a designated start vertex in a DGG would yield a linear-time comparison-based algorithm for directed $s$--$t$ BP.
Whether directed $s$--$t$ BP admits such an algorithm has remained open since Gabow and Tarjan gave their $\Oh(m\log^* n)$-time algorithm.
Thus, a positive resolution of the open question for DGGs would also resolve the corresponding open question for directed $s$--$t$ BP.

\end{abstract}

\section{Introduction}
\label{sec:introduction}

In the directed $s$--$t$ bottleneck path (BP) problem, the input is a finite directed graph $G=(V,E)$ with $n\coloneqq |V|$ vertices and $m\coloneqq |E|$ edges, a weight function $w\colon E\to\Lambda$ into a totally ordered set $\Lambda$, and two distinguished vertices $s$ and $t$.
The goal is to find an $s$--$t$ path whose largest edge weight is as small as possible.
The problem appears in the work of Edmonds and Fulkerson~\cite{EdmondsFulkerson1970}.
In the comparison model, Gabow and Tarjan~\cite{GabowTarjan1988} obtained a deterministic $\Oh(m\log^* n)$-time algorithm for directed BP (via their algorithm for directed bottleneck spanning trees).
Chechik, Kaplan, Thorup, Zamir, and Zwick~\cite{ChechikEtAl2016} later gave a slightly improved randomized comparison-based bound and a deterministic linear-time word-RAM algorithm.
They left the existence of a linear-time comparison-based BP algorithm as an open problem.

A \emph{deterministic graphical game} (DGG) is specified by a finite directed graph $G=(V,E)$, a disjoint partition $V=V_0\cup V_1\cup T$, a payoff function $p\colon T\to\Gamma$, and a designated start vertex $s\in V_0\cup V_1$.
Here, $\Gamma$ is a totally ordered payoff set containing a distinguished draw payoff $0$.
Terminals have no outgoing edges, and every nonterminal vertex has at least one outgoing edge.
A token starts at $s$.
At a vertex $u\in V_i$, player $i$ chooses an outgoing edge $(u,v)$ and moves the token to $v$.
If the token reaches a terminal $t$, the outcome is $p(t)$; an infinite play has outcome $0$.
Player~$0$, also known as \Min, tries to minimize the outcome, while player~$1$, also known as \Max, tries to maximize it.
Solving the game from $s$ means finding its value and optimal strategies for both players.
The model was introduced by Washburn~\cite{Washburn1990}.

Andersson, Hansen, Miltersen, and S{\o}rensen~\cite{AnderssonEtAl2010} gave a near-linear-time comparison-based algorithm for solving a DGG from a designated start vertex and asked whether the problem admits a linear-time comparison-based algorithm.
Chechik, Kaplan, Thorup, Zamir, and Zwick~\cite{ChechikEtAl2016} claimed to resolve this open question by giving a simple deterministic linear-time comparison-based algorithm.

We propose a correction to this last result.
First, we give a counterexample to the procedure of~\cite{ChechikEtAl2016}.
Second, we give a deterministic linear-time reduction from directed $s$--$t$ BP to the problem of computing the value of a designated start vertex of a DGG instance.
In particular, the DGG problem is at least as hard as directed $s$--$t$ BP.
These observations leave the following two questions open:

\begin{quote}
\centering{\textit{Does directed $s$--$t$ BP admit a linear-time comparison-based algorithm?}}
\end{quote}

\begin{quote}
\centering{\textit{Does DGG with a designated start vertex admit a linear-time comparison-based algorithm?}}
\end{quote}

Our reduction shows that a positive answer to the second question would imply a positive answer to the first.

\section{The Claimed Algorithm and a Counterexample}
\label{sec:counterexample}

Theorem~9 of Chechik et al.~\cite{ChechikEtAl2016} claims a deterministic $\Oh(m)$-time comparison-based algorithm that, given a DGG and a designated start vertex, finds its value and optimal strategies for both players.
The key step in their algorithm is the following lemma.
We restate the lemma and spell out its proof to make the issue self-contained.

\begin{lemma}[Lemma~8 of Chechik et al.~\cite{ChechikEtAl2016}]
\label{lem:claimed-lemma-eight}
Let $G=(V,E)$ be a DGG with $V=V_0\cup V_1\cup T$, $T=\set{t_1,t_2}$, and $0<p(t_1)<p(t_2)$.
For $i\in\set{1,2}$, let $W_i$ be the set of vertices whose value is $p(t_i)$, and let
\[
  E_i\coloneqq\setof{(u,v)\in E}{v\in W_i},
  \qquad m_i\coloneqq|E_i|.
\]
If $W_1\cup W_2=V$ (that is, no vertex has value $0$), then there is a deterministic algorithm for computing either $W_1$ or $W_2$ in $\Oh(\min\set{m_1,m_2})$ time.
\end{lemma}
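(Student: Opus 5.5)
The plan is to realise both $W_1$ and $W_2$ as \emph{attractor} (backward-induction) sets computed from the terminals. I would run the two computations in lock-step, one edge scan each per round, and stop as soon as either terminates. If each computation touches only edges into the set it produces, the total work is $\Oh(\min\set{m_1,m_2})$.

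\textbf{The set $W_2$.} Since $p(t_2)$ is the largest payoff, a vertex has value $p(t_2)$ iff \Max can force the token to $t_2$. So $W_2$ is the \Max-attractor of $\set{t_2}$, computed by the standard backward search.
\begin{itemize}
\item Start with $\set{t_2}$ and, for each newly added vertex $v$, scan the incoming edges $(u,v)$.
\item A \Max vertex $u$ is added at its first such edge.
\item A \Min vertex $u$ keeps a counter of scanned out-edges into the current set and is added when the counter reaches its out-degree.
\item Counters are initialised lazily, on the first touch, and out-degrees are read in $\Oh(1)$ from the adjacency representation.
\end{itemize}
Every scanned edge ends in $W_2$, so this search costs $\Oh(m_2)$.

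\textbf{The set $W_1$.} I would use the hypothesis $W_1\cup W_2=V$: no vertex has value $0$, so from every vertex \Max can force termination. Hence $V\setminus W_2$ consists of vertices from which \Min avoids $t_2$ while \Max still forces reaching $t_1$. I would then compute $W_1$ symmetrically, by backward search from $t_1$, again scanning only edges $(u,v)$ with $v$ already in the set.
\begin{itemize}
\item A \Min vertex joins $W_1$ as soon as it has an edge into the current set, since \Min prefers $t_1$ to $t_2$.
\item A \Max vertex joins once all its out-edges lead into the current set.
\end{itemize}
The global assumption that no vertex has value $0$ is what should justify these local rules. Under this justification the search touches only edges into $W_1$, costing $\Oh(m_1)$.

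\textbf{Interleaving.} Alternate single edge scans of the two searches. When one search's queue empties, that set is complete. Output it; the other set is then its complement, and the time used is at most twice the cost of the cheaper search.

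\textbf{Main obstacle.} The hard step is justifying the local rules for $W_1$. Adding a \Min vertex that merely has an edge into the current $W_1$ is sound only if \Min cannot profit from prolonging play, and infinite play yields $0<p(t_1)$. \Min always prefers an infinite play, so membership in $W_1$ depends on whether \Max can force termination at $t_1$. This is a global condition about avoiding $t_2$, not something decidable from edges into the current set.

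I would first try to prove, by induction on the order of insertion, that every inserted vertex has value $p(t_1)$ and that every vertex of $W_1$ is eventually inserted, using $W_1\cup W_2=V$ in the second part. This second direction, showing that the $t_1$-side search is both correct and confined to edges into $W_1$, is exactly where I expect the argument to be fragile. It is therefore the step to scrutinise for a counterexample.
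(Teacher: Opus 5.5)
Your plan is exactly the two-search procedure of Chechik et al.\ that the paper quotes and then refutes. The step you flagged as fragile really does fail, so the proposal does not prove the statement. The false claim sits in your interleaving step: ``when one search's queue empties, that set is complete'' does not hold for the $t_1$-side search. Take the game of Figure~\ref{fig:counterexample}:
\begin{itemize}
\item the \Min vertex $s$ has edges to $u$ and $v$;
\item the \Max vertex $u$ has a self-loop and an edge to $t_1$;
\item the \Min vertex $v$ has a single edge, to $t_2$;
\item the payoffs are $0<a=p(t_1)<b=p(t_2)$.
\end{itemize}
Then $\val(u)=a$, because \Max prefers $a$ to the draw $0$. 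Also $\val(v)=b$ and $\val(s)=a$. So $W_1=\set{s,u,t_1}$, $W_2=\set{v,t_2}$, and no vertex has value $0$. Your $t_1$-search scans $(u,t_1)$ and cannot add the \Max vertex $u$, because its out-edge $(u,u)$ does not yet lead into the current set. Its queue is then empty, and it returns $\set{t_1}$ after one scan. The $t_2$-search needs two scans, so the interleaved algorithm outputs a wrong set.

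You have also placed the difficulty in the wrong rule. Under the hypothesis both local rules are sound, because values satisfy the local min/max equations:
\begin{itemize}
\item a \Min vertex with a successor of value $p(t_1)$ has value at most $p(t_1)$, hence exactly $p(t_1)$ since $0$ is excluded;
\item a \Max vertex all of whose successors have value $p(t_1)$ has value $p(t_1)$.
\end{itemize}
So the soundness half of your induction goes through, and the search does scan only edges into $W_1$.

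What breaks is completeness of the \Max rule. Your $t_1$-search computes the \Min-attractor of $\set{t_1}$. That is a least fixpoint, and it misses every vertex of $W_1$ from which \Max could avoid $t_1$ forever, such as $u$ with its self-loop. Under the hypothesis, however, $W_1=V\setminus W_2$ is the complement of the \Max-attractor of $\set{t_2}$, a greatest-fixpoint set. \Max never actually wants to cycle, because an infinite play pays $0<p(t_1)$.

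Your plan gives no way to certify such \Max vertices using only edges into $W_1$. The natural certificate, that none of the vertex's successors lies in $W_2$, is information produced by the $\Oh(m_2)$ search. The paper supplies no corrected proof either: it refutes the procedure, not the existence claim, and leaves open the linear-time question that this lemma was meant to settle.
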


\begin{proof}[Proof given by Chechik et al.]
Their Lemma~7 gives a backward-search algorithm for a game with a unique positive-payoff terminal.
The search begins at the terminal and processes incoming edges.
When processing $(u,v)$ with $v$ already discovered, it discovers $u$ if $u$ is a \Max vertex or if $(u,v)$ is the last remaining outgoing edge of the \Min vertex $u$; otherwise it removes $(u,v)$.
Processing one incoming edge is called a basic step.

Run two copies of this search in parallel.
In the first copy, add a self-loop to $t_1$, so that $t_1$ is no longer a terminal; this copy is meant to construct $W_2$.
In the second copy, add a self-loop to $t_2$ and exchange the roles of the two players; this copy is meant to construct $W_1$.
Alternate basic steps between the two copies and stop both as soon as one copy finishes.
Since the two searches are claimed to take $\Oh(m_2)$ and $\Oh(m_1)$ time, respectively, the parallel execution takes $\Oh(\min\{m_1,m_2\})$ time.
\end{proof}

While the presented algorithm for computing $W_2$ is correct, the algorithm for computing $W_1$ does not work in general.
Fix payoffs $0<a<b$ and consider the game in \cref{fig:counterexample}.
Its vertices are $\{s,u,v,t_1,t_2\}$.
The vertices $s$ and $v$ belong to \Min, the vertex $u$ belongs to \Max, and $t_1,t_2$ are terminals with $p(t_1)=a$ and $p(t_2)=b$.
The edges are
\[
  (s,u),\quad (s,v),\quad (u,u),\quad (u,t_1),\quad (v,t_2).
\]
(If no self-loops are allowed, the self-loop of vertex $u$ may be replaced by a cycle.)

\begin{figure}[tb]
  \centering
  \begin{tikzpicture}[
      minvertex/.style={circle,draw,thick,minimum size=8mm,inner sep=0pt},
      maxvertex/.style={rectangle,draw,thick,minimum size=8mm,inner sep=0pt},
      terminal/.style={circle,double,draw,thick,minimum size=8mm,inner sep=0pt},
      gameedge/.style={-{Latex[length=2.2mm]},thick}]
    \node[minvertex] (s) at (0,0) {$s$};
    \node[maxvertex] (u) at (2.25,1.05) {$u$};
    \node[minvertex] (v) at (2.25,-1.05) {$v$};
    \node[terminal] (t1) at (4.75,1.05) {$t_1$};
    \node[terminal] (t2) at (4.75,-1.05) {$t_2$};
    \draw[gameedge] (s) -- (u);
    \draw[gameedge] (s) -- (v);
    \draw[gameedge] (u) -- (t1);
    \draw[gameedge] (v) -- (t2);
    \draw[gameedge] (u) edge[loop above,min distance=9mm] (u);
    \node[right=2mm of t1] {$p(t_1)=a$};
    \node[right=2mm of t2] {$p(t_2)=b$};
    \node[left=1mm of s] {\footnotesize \Min};
    \node[above left=0mm and 1mm of u] {\footnotesize \Max};
    \node[below left=0mm and 1mm of v] {\footnotesize \Min};
  \end{tikzpicture}
  \caption{A counterexample to the parallel procedure in the proof of Lemma~8 of Chechik et al.~\cite{ChechikEtAl2016}.
  Circles represent \Min vertices, the square represents a \Max vertex, and double circles represent terminals.}
  \label{fig:counterexample}
\end{figure}

The vertex values are immediate.
We have $\val(v)=b$ because the only move from $v$ leads to $t_2$.
At $u$, \Max chooses between the payoff $a$ at $t_1$ and the payoff $0$ obtained by following the self-loop forever; hence $\val(u)=a$.
Finally, \Min chooses the smaller of the values at $u$ and $v$, so $\val(s)=a$.
Therefore $W_1=\set{s,u,t_1}$ and $W_2=\set{v,t_2}$, and the hypothesis $W_1\cup W_2=V$ of \cref{lem:claimed-lemma-eight} is satisfied.

Consider now the two searches in the claimed proof.
The first search adds a self-loop at $t_1$ and searches backward from $t_2$.
It discovers $v$, since $(v,t_2)$ is the last outgoing edge of the \Min vertex $v$.
It does not discover $s$, since $s$ still has the outgoing edge $(s,u)$.
Thus this search computes the correct set $W_2=\set{v,t_2}$.

The second search adds a self-loop at $t_2$ and exchanges the players.
It starts from $t_1$ and processes the only incoming edge $(u,t_1)$.
After the player exchange, $u$ is a \Min vertex, and $(u,t_1)$ is not its last outgoing edge because the self-loop $(u,u)$ remains.
The search therefore removes $(u,t_1)$ instead of discovering $u$.
Its worklist is now empty, so it terminates and returns $W_1'=\set{t_1}$, rather than the true set $W_1=\set{s,u,t_1}$.
If this search receives the first basic step, the parallel algorithm stops immediately with this incorrect answer.
If the $W_2$ search receives the first step, it only discovers $v$; the following $W_1$ step still terminates with $W_1'$.

\section{A Linear-Time Reduction from BP to the DGG Problem}
\label{sec:reduction}

Let $(G=(V,E),w,s,t)$ be an instance of directed $s$--$t$ BP, where $w\colon E\to\Lambda$ and $\Lambda$ is totally ordered.
Write
\[
  \bp_G(s,t)\coloneqq \min_{P:s\leadsto t}\ \max_{e\in P}w(e).
\]

We extend $\Lambda$ by two fresh values, denoted $\bot$ and $\mathbf 0$, ordered as
\begin{equation}
  \bot < \lambda < \mathbf 0
  \qquad\text{for every }\lambda\in\Lambda.
  \label{eq:extended-order}
\end{equation}
The fresh value $\mathbf 0$ is the draw payoff assigned to an infinite play.

If the original BP weights are real and one insists that all game payoffs be real numbers, let $M\coloneqq \max_{e\in E}w(e)$ and replace every weight by
\[
  \bar w(e)\coloneqq w(e)-M-1<0.
\]
This translation preserves every optimal path and shifts the bottleneck value by $-(M+1)$.
Setting $\bot\coloneqq \min_{e\in E}\bar w(e)-1$ then realizes \eqref{eq:extended-order} over the reals, with the usual draw payoff $0$; the original bottleneck value is recovered by adding $M+1$.

We first delete every edge leaving $t$ and discard every vertex that is not both reachable from $s$ and able to reach $t$.
Two graph searches perform this preprocessing in $\Oh(n+m)$ time.
The preprocessing does not change the bottleneck value, and every remaining vertex other than $t$ has an outgoing edge in the new graph.
For notational simplicity, continue to call the resulting instance $G=(V,E)$.

For every edge $e\in E$, create a vertex $x_e$ and a terminal $z_e$.
The vertices of the game $\mathcal G(G)$ are
\[
  V\ \cup \ X\ \cup \ Z,
  \qquad
  X\coloneqq \setof{x_e}{e\in E},\quad
  Z\coloneqq \setof{z_e}{e\in E}.
\]
The vertex $t$ and all vertices in $Z$ are terminals.
Every vertex in $V\setminus\{t\}$ belongs to \Min, and every vertex in $X$ belongs to \Max.
For each $e=(u,v) \in E$, add the three edges
\[
  (u,x_e),\qquad (x_e,v),\qquad (x_e,z_e)
\]
to the game.
Finally, set $p(t)=\bot$ and $p(z_e)=w(e)$.
An infinite play has payoff $\mathbf 0$, which is larger than every edge weight.
The edge gadget is shown in \cref{fig:reduction-gadget}.
The game has at most $n+2m$ vertices and $3m$ edges.

\begin{figure}[tb]
  \centering
  \begin{tikzpicture}[
      minvertex/.style={circle,draw,thick,minimum size=7mm,inner sep=0pt},
      maxvertex/.style={rectangle,draw,thick,minimum size=7mm,inner sep=0pt},
      terminal/.style={circle,double,draw,thick,minimum size=7mm,inner sep=0pt},
      gameedge/.style={-{Latex[length=2.2mm]},thick},
      node distance=19mm and 24mm]
    \node[minvertex] (u) {$u$};
    \node[maxvertex,right=of u] (xe) {$x_e$};
    \node[minvertex,right=of xe] (v) {$v$};
    \node[terminal,below=13mm of xe] (ze) {$z_e$};
    \draw[gameedge] (u) -- (xe);
    \draw[gameedge] (xe) -- (v);
    \draw[gameedge] (xe) -- (ze);
    \node[below=1mm of ze] {$p(z_e)=w(e)$};
    \node[below=1mm of u] {\footnotesize \Min};
    \node[above=1mm of xe] {\footnotesize \Max};
    \node[below=1mm of v] {\footnotesize \Min};
  \end{tikzpicture}
  \caption{The DGG gadget replacing an edge $e=(u,v)$.}
  \label{fig:reduction-gadget}
\end{figure}

\begin{theorem}
\label{thm:reduction}
The value of $s$ in $\mathcal G(G)$ is $\bp_G(s,t)$.

\end{theorem}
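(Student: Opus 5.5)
We prove the two inequalities $\val(s)\le \bp_G(s,t)$ and $\val(s)\ge \bp_G(s,t)$ separately, each by exhibiting an explicit positional strategy and bounding the outcome of every play against it. Throughout, write $\beta\coloneqq\bp_G(s,t)$. After preprocessing, $t$ is reachable from $s$, so $\beta\in\Lambda$ is well defined, and every vertex of $V\setminus\{t\}$ has an outgoing edge. Hence $\mathcal G(G)$ is a valid DGG.

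\textbf{Upper bound (a \Min strategy).} Fix an optimal $s$--$t$ path $P$ in $G$, taken simple, so every $e\in P$ has $w(e)\le\beta$. Let \Min follow $P$: at a vertex $u$ on $P$ with successor edge $e=(u,v)$, \Min moves to $x_e$. At vertices off $P$ \Min moves arbitrarily; such vertices are never reached while \Max plays, as we now check. Consider any \Max strategy. Each time the token is at some $x_e$ with $e\in P$, \Max either exits to $z_e$, giving payoff $w(e)\le\beta$, or continues to the next vertex of $P$. Since $P$ is simple and finite, the play cannot be infinite. If \Max never exits, the token reaches $t$ with payoff $\bot<\beta$. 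So every outcome is at most $\beta$, and $\val(s)\le\beta$.

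\textbf{Lower bound (a \Max strategy).} Let \Max play as follows: at $x_e$, exit to $z_e$ if $w(e)\ge\beta$, and continue to $v$ if $w(e)<\beta$. Consider any \Min strategy and the resulting play. If it ever exits at some $z_e$, then by construction $w(e)\ge\beta$. Otherwise every visited $x_e$ has $w(e)<\beta$, so the play traces a walk in the subgraph $G_{<\beta}\coloneqq\setof{e\in E}{w(e)<\beta}$ starting at $s$. This walk cannot reach $t$: a walk from $s$ to $t$ in $G_{<\beta}$ contains a path with bottleneck $<\beta$, contradicting the definition of $\beta$. Since the play never reaches $t$ and never exits, it is infinite and has payoff $\mathbf 0>\beta$. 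In all cases the outcome is at least $\beta$, so $\val(s)\ge\beta$.

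The only delicate point is the role of infinite plays, which is why the order \eqref{eq:extended-order} is chosen as it is. In the upper bound, \Min's strategy must force termination; this holds because $P$ is simple, so \Max cannot stall. In the lower bound, \Max's strategy must punish \Min for cycling in $G_{<\beta}$; this holds because the draw value $\mathbf 0$ exceeds all weights. We also use that both players' values exist with the stated positional strategies, which needs no general determinacy theorem, since the two strategies above certify matching bounds.
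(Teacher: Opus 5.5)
Your proof is correct and follows essentially the same argument as the paper. For the upper bound, \Min follows a simple optimal path, so every play ends at some $z_e$ with $w(e)\le\beta$ or at $t$ with payoff $\bot$. For the lower bound, \Max exits exactly on edges of weight at least $\beta$, so a non-stopping play either reaches $t$ along edges of weight below $\beta$, which is impossible, or is infinite with payoff $\mathbf 0>\beta$.
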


\begin{proof}
Write $B\coloneqq \bp_G(s,t)$.
We prove matching upper and lower bounds on the game value.

For the upper bound, fix a simple $s$--$t$ path $P$ whose bottleneck is $B$.
At each original vertex on $P$, \Min chooses the vertex $x_e$ belonging to the next edge $e$ of $P$.
At $x_e$, \Max can stop at $z_e$, obtaining $w(e)\leq B$, or continue to the next original vertex.
If \Max always continues, the play reaches $t$ and obtains $\bot<B$.
Hence this \Min strategy guarantees an outcome at most $B$, so $\val_{\mathcal G(G)}(s)\leq B$.

For the lower bound, let \Max stop at $z_e$ whenever $w(e)\geq B$, and otherwise continue to the end-vertex of $e$.
A play that stops has payoff at least $B$, and an infinite play has payoff $\mathbf 0>B$.
If a play reached $t$ without stopping, its projection in $G$ would be an $s$--$t$ walk using only edges of weight strictly smaller than $B$.
Removing cycles would give an $s$--$t$ path with bottleneck smaller than $B$, a contradiction.
Thus \Max guarantees an outcome at least $B$, and $\val_{\mathcal G(G)}(s)\geq B$.
\end{proof}

\begin{corollary}
If the value of a designated start vertex in a DGG can be computed (deterministically) in linear time in the comparison model, then directed $s$--$t$ BP can be solved (deterministically) in $\Oh(n+m)$ time in the comparison model.
\end{corollary}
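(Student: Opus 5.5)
The corollary follows by composing three linear-time stages with the hypothesized DGG algorithm. Given a BP instance $(G,w,s,t)$, I would first run the preprocessing of \cref{sec:reduction}: delete the edges leaving $t$ and restrict to the vertices that are both reachable from $s$ and can reach $t$. This uses two graph searches, takes $\Oh(n+m)$ time, and makes no weight comparisons. If $t$ is not reachable from $s$, I output that no path exists and stop. Otherwise I build $\mathcal G(G)$. It has at most $n+2m$ vertices and $3m$ edges, and every payoff is either an original weight or one of the fresh symbols $\bot$ and $\mathbf 0$, so the construction also costs $\Oh(n+m)$ time.

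Next, I check that $\mathcal G(G)$ is a valid DGG. Every nonterminal needs an outgoing edge: each vertex in $V\setminus\{t\}$ has one after preprocessing, and each $x_e$ has two. Terminals must have no outgoing edges, which holds because $t$ was stripped of its out-edges and each $z_e$ has none. I then run the assumed linear-time comparison-based DGG algorithm with start vertex $s$. The case $s=t$ is trivial and I handle it separately. By \cref{thm:reduction}, the returned value equals $\bp_G(s,t)$.

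The only point that needs care is that the DGG algorithm operates in the comparison model over the extended order~\eqref{eq:extended-order}. I would simulate comparisons involving $\bot$ or $\mathbf 0$ in constant time by tagging these two symbols, so that only comparisons between genuine weights are real comparisons. Alternatively, for real weights one can use the translation $\bar w$ described in the paper. That translation costs $\Oh(m)$ time, needs one pass to compute the maximum and minimum weights, and the answer is recovered by adding $M+1$.

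The total running time is linear in the size of $\mathcal G(G)$, which is $\Oh(n+m)$. Determinism carries over because every stage is deterministic. If an actual path is required, not only its value, I would obtain one in a further $\Oh(m)$ time by searching from $s$ to $t$ using only edges $e$ with $w(e)\le \bp_G(s,t)$. I do not expect any step to be a real obstacle; the main thing to verify is that the extended payoff order introduces no hidden superlinear cost.
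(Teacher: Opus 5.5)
Your proposal is correct and follows the same route as the paper: build $\mathcal G(G)$ in linear time, use \cref{thm:reduction} to read off $\bp_G(s,t)$ as the value of $s$, and recover an optimal path by a graph search over the edges of weight at most $\bp_G(s,t)$. Your extra checks are details the paper leaves implicit: that $\mathcal G(G)$ is a valid DGG, the $s=t$ and unreachable-$t$ cases, and constant-time handling of comparisons involving $\bot$ and $\mathbf 0$.
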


\begin{proof}
Apply \cref{thm:reduction}.
The game $\mathcal G$ can be constructed in linear time.
If an optimal path rather than only its value is required, perform a graph search in the subgraph consisting of the edges whose weight is at most $\bp_G(s,t)$.
\end{proof}

\subparagraph{AI Disclosure.}
Large language models were used to prepare an initial draft from proof notes written by the author and to assist with figure creation.
The author subsequently revised and verified the manuscript and takes full responsibility for its contents.
All mathematical ideas and results are the author's own.

\bibliography{refs}

\end{document}